\newtheorem{theorem}{Theorem}
\newtheorem{corollary}{Corollary}
\newtheorem{prop}{Proposition}
\newtheorem{lemma}{Lemma}
\newtheorem{example}{Example}
\newtheorem{remark}{Remark}
\newcommand{\beq}{\begin{equation}}
\newcommand{\eeq}{\end{equation}}
\newcommand{\barr}{\left[\begin{array}}
\newcommand{\earr}{\end{array}\right]}
\newcommand{\bpf}{\begin{proof}}
\newcommand{\epf}{\end{proof}}
\newcommand{\ftwo}{\ensuremath{\mathbb{F}_{2}}}
\newcommand{\ff}{\ensuremath{\mathbb{F}}}
\newcommand{\GOE}{\mbox{\rm GOE}\,}
\newcommand{\COI}{\mbox{\rm COI}\,}
\title{Deciding One to One property of Boolean maps: Condition and algorithm in terms of implicants.}
\author{Virendra Sule\\Professor of Practice\\Department of Computer Science and Engineering\\
Indian Institute of Technology Hyderabad\\ India\\virendra.sule@prjt.cse.iith.in}
\date{September 9, 2025}
\begin{document}
\maketitle
\begin{abstract}
This paper addresses the computational problem of deciding invertibility (or one to one-ness) of a Boolean map $F$ in $n$-Boolean variables. This problem is a special case of deciding invertibilty of a map $F:\mathbb{F}_{q}^n\rightarrow\mathbb{F}_{q}^n$ over the finite field $\mathbb{F}_q$ for $q=2$. Algebraic condition for invertibility of $F$ is well known to be equivalent to invertibility of the Koopman operator of $F$ as shown in \cite{RamSule}. In this paper a condition for invertibility is derived in the special case of Boolean maps $F:B_0^n\rightarrow B_0^n$ where $B_0$ is the two element Boolean algebra in terms of \emph{implicants} of Boolean equations defined by the map. This condition is then extended to the case of general maps in $n$ variables and $m\geq n$ equations. Hence this condition answers the special case of invertibility of maps $F$ defined over the binary field $\mathbb{F}_2$ alternatively, in terms of implicants instead of the Koopman operator. The problem of deciding invertibility of a map $F$ (or that of finding its Garden of Eden (GOE)) over finite fields is distinct from the satisfiability problem (SAT) or the problem of deciding consistency of polynomial equations over finite fields. Hence the well known algorithms for deciding SAT or of solvability using Grobner basis for checking membership in an ideal generated by polynomials is not known to answer the question of invertibility of a map. Similarly it appears that algorithms for satisfiability or polynomial solvability are not useful for computation of GOE of $F$ even for maps over the binary field $\mathbb{F}_2$. 
\end{abstract}

\noindent
\begin{tabular}{ll}
\emph{Arxiv Subject categories}\/: & cs.DM, cs.CC, cs.LO, math.AG\\
\emph{ACM classification}\/: & I.1.2, G.2.0\\
\emph{Keywords}\/: & Polynomial maps, Boolean equations,\\
 & Finite Fields, Implicants of Boolean functions,\\ 
 & Permutation polynomials.
\end{tabular}
\section{Introduction}
Mappings in finite domains are omnipresent in applications of Computational Sciences. Due to the complex connections of causes (inputs) and effects (outputs) the problems such as deciding invertibility and computing an inverse of a map at a given local point in the image turn out to be $NP$ hard. It appears that even for an approach using Artificial Intelligence (AI) to solving such a problem, efficient computation of intermediate mathematical problems can be enormously useful. For concreteness, consider $F$ to represent a non-linear map in the domain which is a cartesian product of a finite field. Proving $F$ to be invertible is a fundamental mathematical problem and whose computational complxity depends on not only the size of its domain but also on whether $F$ is linear or non-linear. Apart from the linear maps $F$ in vector spaces there is no easy condition known for invertibility of general non-linear maps which frequently arise in applications. Hence it is not only of theoretical interest to pose invertibilty of maps but is of high practical importance to pose it as a computational question and understand the complexity of inversion. An objective of this paper is to explore the computational side of the invertibilty question and to provide an answer for the special case of Boolean maps.

Consider first the algebraic version of the invertibility problem. Let $\ff$ be a finite field and $n$ a fixed number. Given a polynomial map $F:\ff^n\rightarrow\ff^m$, defined by $m$ polynomials in $n\leq m$ variables (indeterminates),
\beq\label{PolyMap}
F=(f_1(X_1,\ldots,X_n),\ldots,f_m(X_1,\ldots,X_n))
\eeq
For $m=n$, $F$ is called \emph{locally invertible} at a given point $y$ in $\ff^m$ if there exists a unique $x$ in $\ff^n$ such that $F(x)=y$, or called \emph{invertible} if there exist solutions $x$ to $F(x)=y$ for every $y$ and $F(x)=F(\tilde{x})$ implies $x=\tilde{x}$. When $m>n$, we call $F$ \emph{locally one to one} if there exists a unique solution $x$ for a given $y$ and \emph{one to one} when for every $y$ in the image of $F$, $F(x)=F(\tilde{x})$ implies $x=\tilde{x}$. These decision problems are required to be solved by describing an algorithm which accepts $F$ as an input. The complexity of the algorithm is described in terms of $n$ for a fixed field $\ff$. As a general condition for invertibility when $m=n$, it follows that $F$ is invertible iff all orbits of iterations $F^{(k)}(x)$ for arbitrary $x$ in $\ff^n$ are closed. In other words, $F$ is a permutation of $\ff^n$. Such an algorithm based on iterates of $F$ to find whether every orbit is closed, clearly has exponential complexity if at least one of the orbits has an exponential number of points since the cardinality $|\ff^n|=p^n$ is exponential for characteristic $p$. Another problem close to invertibility of a map is that of proving uniqueness of the solution, i.e. proving that there exists a unique $x$ in $\ff^n$ which is a solution of the polynomial equations
\beq\label{EquationSystem}
f_i(X_1,\ldots,X_n)=0,i=1,\ldots,m
\eeq
for $m\geq n$. Even if the algebraic form of the map $F$ as $m$ polynomials over $\ff$ in $n\leq m$ variables is given, existence of unique solutions of equations for $F(x)=y$ for a given $y$ in $\ff^m$ is not decidable by any known algorithm deciding solvability of these equations. Hence constructing an algorithmic solution to deciding invertibilty of $F$ or uniqueness of solutions $x$ of equations $F(x)=y$ for a given $y$, is a well defined yet poorly explored problem over finite fields. (In fact, invertibility of polynomial maps and uniqueness of solutions of polynomial equations are among the unresolved problems in Algebraic Geometry which surface in terms of unsolved problems such as the Jacobian problem). Alternatively, map $F$ over a finite field (when $n=m$) is invertible iff the set of points in the $\GOE$ of $F$, the complement of the image of $F$, is empty. In this paper we shall present an algorithm for the special case of the problem of deciding the invertibility of $F$ over $\ff_2$ for $m=n$ or deciding the uniqueness of solutions of equations $F(x)=y$ over $\ftwo$ for a given $y$ in the general case when $m\geq n$ using implicants of Boolean equations. Alternatively, we also develop an algorithm to decide invertibility as well as formulating equations whose solutions form $\GOE$ of $F$ and its generalisation for $m>n$. In the case $m>n$ the analogue of $\GOE$ is called as the \emph{Complement of Image} $\COI$ of $F$.

In this paper invertibility of square maps (when $m=n$) and uniqueness of solutions $x$ for more general maps (when $m>n$) is presented in terms of implicants of Boolean equations $F(x)=y$. Since computing a complete set of implicants of Boolean equations is equivalent to representing all solutions of a system of Boolean equations, the problem of deciding invertibility or one to one-ness is a harder problem than satisfiability (SAT) problem since the SAT problem is concerned with only deciding existence or otherwise of a solution of such a system.

\subsection{Previous work}
One of the papers which considered the question of invertibility of polynomial maps over real fields is \cite{Newman}. The field being real, there are different issues which need to be addressed as compared to when the field is finite. The paper \cite{Newman} mainly proves that if a polynomial map in two variables over reals is one to one then it is also onto. However the case of several variables ($n>2$) is not addressed in the paper. Over finite fields $\ff$ this question does not occur for $F:\ff^n\rightarrow \ff^n$ because when the domain is finite one to one and onto are equivalent. In \cite{RamSule} authors showed that over finite fields, above map $F$ is one to one iff the reduced Koopman operator of the map $F$ in the space of $\ff$ valued functions over $\ff^n$ is an invertible linear map. An explicit inverse of $F$ is also constructed in the paper using a basis of an invariant subspace of the vector space of polynomial functions over $\ff$. The size of the Reduced Koopman operator is however exponentially large in general in $n$ but for certain low degree (quadratic) maps $F$ there are special cases of maps (defined by feedback shift registers) for which the dimension of the Koopman operator is of polynomial size. Hence in such special cases the invertibility of $F$ can be answered in terms of the non-singularity of the reduced Koopman operator in polynomial time. Hence the algorithm based on Koopman operator is one of the non-brute force algorithms but still has exponential complexity in general. In finite fields all maps are polynomial maps. Those which are in one indeterminate and invertible (called permutation polynomials) have been studied for long \cite{ZhengWangWei}. The conditions for a polynomial to be a permutation in the finite field of its definition in terms of its co-efficients are well known \cite{RamSuleInverse}. However these do not resolve the problem of deciding the invertibility in many variables $n>1$. In fact all the past references on permutation polynomials, for instance those referred in \cite{ZhengWangWei} are for one variable polynomial maps in finite fields. An explicit formula for inverse of multivariable maps and permutation polynomials in terms of the Koopman operator is developed in \cite{RamSuleInverse}. Grobner basis based algorithms for solving polynomial systems over finite fields \cite{VonZerGathen} are also not known to have addressed the invertibility question of maps. This is possibly due to the fact that the question of invertibility is distinct from the statement of Hilbert's nullstellentzas. Boolean functions and equations have a long history of developments since George Boole's treatise ``Laws of Thought (1954)" \cite{Boole}. While Boolean functions were of primary interest in Logic, Shannon gave insight into the application of Boolean arithmetic for design of switching networks \cite{Shannon}. The question of synthesis of invertible Boolean maps did not explicitly get addressed in Shannon's theory of switching circuits. In the 1960s Boolean arithmetic became important for the satisfiability theory \cite{HBSAT, SchonToran} and the algorithms for CNF-SAT. Again, SAT theory addresses a decision problem which is different from invertibility of Boolean maps or uniqueness of solutions. Boolean functions and logic of Boolean equations has been exhaustively treated by Rudeanu \cite{Rude} and Brown \cite{Brown}. Unfortunately these works by \cite{Rude,Brown} have somehow been poorly appreciated in the Computer Science literature. A recent survey of Boolean elimination theory described in \cite{Rude,Brown} and its applications to computation have been documented extensively in \cite{CramaHammer}. Several equivalent conditions for invertibility of Boolean maps have been stated and proved in \cite{Rude}. All these conditions are mathematical equivalences of invertibilty of Boolean maps. However these conditions further require algorithms for deciding the invertibility of computation of GOE or COI without which they cannot be concrete from a computational point of view. Finally, it must be clarified that the condition for invertibility in terms of Koopman operator of $F$ is not directly applicable to the problem of one to one-ness of a more general map $F:\ff^n\rightarrow\ff^m$ when $m>n$. This paper is based on the previous article \cite{Sule25} by the author.

\section{An implicant based condition for invertibility using the graph of $F$}
As described in the appendix the set of all solutions of a Boolean system of equations can be represented by a complete set of OG implicants of the system. The algorithm described in \cite{Sule2017} is an efficient parallel non-brute force algorithm for computation of a complete set of OG implicants of a Boolean system. We now show that the invertibilty of a Boolean map $F$ can be described in terms of the factors of the implicants, which correspond with points $(x,y)$ in the graph of the map $F$.

Given the map $F:\ftwo^n\rightarrow\ftwo^n$ by an $n$-tuple of polynomial functions as in (\ref{PolyMap}), the \emph{graph} of the map in $\ftwo^n\times\ftwo^n$ is the set of all $2n$-tuples over $\ftwo$, $(x_1,\ldots,x_n,y_1,\ldots,y_n)$ denoted as $(x,y)$ such that $F(x)=y$ i.e. are all solutions $(x,y)$ of the system of Boolean equations
\beq\label{SystemXY}
f_i(X)=Y_i, i=1,2,\ldots,n
\eeq
in indeterminates $X=(X_1,\ldots,X_n)$ and $Y=(Y_1,\ldots,Y_n)$. An implicant of the system (\ref{SystemXY}) is a term $t$ which is a product of literals $X_i$ or $X_i'$ and $Y_i$ or $Y_i'$ such as $t(X,Y)=t_1(X)t_2(Y)$ such that solutions $X,Y$ of $t(X,Y)=1$ satisfy (\ref{SystemXY}). Hence every such implicant denotes a set of points on the graph of $F$. A \emph{complete} set of implicants of (\ref{SystemXY}) is a set of implicants $I(F)$ such that for every point $(x,y)$ in the graph there exists $t$ in $I(F)$ such that $t(x,y)=1$. Let $I(F)$ be a complete set of implicants of the Boolean system (\ref{SystemXY})). Then the graph of $F$ is represented by the solutions $(x,y)$ of the equation
\beq\label{Eqnofgraph}
\sum_{t\in I(F)}t(X,Y)=1
\eeq
while the set of all points in the graph of $F$ is the union
\beq\label{Graph}
\bigcup_{t\in I(F)}S(t)
\eeq
where $S(t)=\{(a,b),t(a,b)=1\}$ is the set of all satisfying assignments of $t$. Let each such implicant be factored as $t(X,Y)=r(X)s(Y)$ in literals involving only $X$ and $Y$. In the following we consider the set of implicants $I(F)$ to be complete and also \emph{orthogonal} (OG) (which means $t_it_j=0$ for $i\neq j$). For a complete set of orthogonal implicants $I(F)$ the set of all points on the graph of $F$ is the disjoint union
\[
\bigsqcup_{i\in I(F)}S(t)
\]

\begin{theorem}\label{Thinvertibility}
\emph{Let $I(F)$ be a complete and OG set of implicants of the system (\ref{SystemXY}). The map $F$ is invertible iff any of the following equivalent conditions are satisfied.
\begin{enumerate}
 \item $\sum_{t_i\in I(F)}s_i(Y)=1$
 \item $GOE(F)=\emptyset$
\end{enumerate}
}
\end{theorem}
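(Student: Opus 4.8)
The plan is to unwind the definitions so that condition (1) becomes a literal statement about which $y$-values appear in the graph of $F$, and then connect that to the $\GOE$. First I would observe that since $I(F)$ is a complete set of implicants, equation (\ref{Eqnofgraph}) holds exactly on the graph, so for every $x\in\ftwo^n$ there is a unique $t_i=r_i(X)s_i(Y)\in I(F)$ with $t_i(x,F(x))=1$; orthogonality guarantees uniqueness. Evaluating $\sum_i r_i(x)s_i(Y)$ at a fixed $x$ therefore collapses to $s_i(Y)$ for the unique active term, and its satisfying $Y$-assignments are precisely $\{F(x)\}$ (or a set containing it). The key structural fact to extract is: $y$ lies in the image of $F$ if and only if there is some $t_i\in I(F)$ with $s_i(y)=1$. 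Equivalently, $\sum_i s_i(Y)$ is the indicator (as a Boolean function) of $\im F$.

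Granting that fact, condition (1), namely $\sum_i s_i(Y)=1$ identically, says exactly that every $y\in\ftwo^n$ is in the image of $F$, i.e. $F$ is onto. Since $\ftwo^n$ is finite, $F$ onto is equivalent to $F$ injective, which is the definition of invertibility used in the paper. This also immediately gives the equivalence with condition (2): $\GOE(F)$ is by definition the complement of $\im F$, so $\GOE(F)=\emptyset$ iff $\im F=\ftwo^n$ iff $F$ is onto iff $F$ is invertible. So the proof is really a three-way equivalence: (1) $\Leftrightarrow$ $\im F=\ftwo^n$ $\Leftrightarrow$ (2), with $F$ invertible sitting in the middle via finiteness of the domain.

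The one point that needs care — and which I expect to be the main obstacle — is the claim that $\{Y : s_i(Y)=1\}$, ranged over all $i$, covers exactly $\im F$ and nothing more. Completeness of $I(F)$ gives the ``$\supseteq$'' direction: every $(x,y)$ on the graph is caught by some $t_i$, hence that $y$ satisfies the corresponding $s_i$. The reverse inclusion ``$\subseteq$'', that $s_i(y)=1$ forces $y\in\im F$, is more delicate: a priori a factored implicant $r_i(X)s_i(Y)$ could have a $Y$-part satisfied by some $y$ for which the full term $t_i$ is never active on the graph, if $r_i$ is unsatisfiable or if the factorization introduces spurious $Y$-assignments. I would handle this by noting that each $t_i$ is a genuine implicant of the consistent system (\ref{SystemXY}), so $S(t_i)\neq\emptyset$ and every $(a,b)\in S(t_i)$ satisfies (\ref{SystemXY}), i.e. $b=F(a)$; in particular $r_i$ is satisfiable, say by $a$, and then for \emph{that} $a$ we have $t_i(a,b)=1$ for every $b$ with $s_i(b)=1$, forcing $b=F(a)\in\im F$. (Here it is used that $t_i$ factors as a product of $X$-literals times $Y$-literals, so $r_i(a)=1$ and $s_i(b)=1$ together give $t_i(a,b)=1$.) This closes the gap, and the rest is the bookkeeping sketched above.
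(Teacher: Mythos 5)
Your proof is correct and follows essentially the same route as the paper's: identify $\sum_i s_i(Y)$ with the indicator of $\im F$, conclude that condition (1) is equivalent to surjectivity, use finiteness of $\ftwo^n$ to pass between onto and one-to-one, and read off the equivalence with $\GOE(F)=\emptyset$ as the complement-of-image statement. Your extra step justifying that $s_i(y)=1$ forces $y\in\im F$ (via satisfiability of $r_i$ and the product structure $t_i=r_i(X)s_i(Y)$) is a point the paper's proof asserts without argument, so including it is a genuine improvement rather than a deviation.
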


\begin{proof}
    Necessity of 1): Since the set of implicants $I(F)$ is complete for the system (\ref{SystemXY}), a point $(x,y)$ in $\ftwo^n\times\ftwo^n$ is in the graph of the map $F$ or in the solution set of the system (\ref{SystemXY}) iff $t_i(x,y)=1$ for at least one $i$ and further since $I(F)$ is an OG set, there is a unique such $i$. Hence $r_i(x)s_i(y)=1$. Since the map is one to one it is also onto hence the set of all $y$ in $\ftwo^n$ appear in solutions of $s_i(Y)=1$ over all implicants $t_i$ in $T$. Hence $\sum_{t_i\in I(F)}s_i=1$ identically. 
    
    Sufficiency of 1): If 1) holds identically then every point $y$ in $\ftwo^n$ is in the projection of the graph of $F$ or the solution set of the system (\ref{SystemXY}). This shows $F$ is onto hence invertible. 
    
    Condition 1) also implies that every point in $\ftwo^n$ is in the image of $F$ which implies $GOE(F)=\emptyset$. Conversely, if $GOE(F)$ is empty then every point $y$ appears as a component of a point in the graph of the map $F$. Hence $\sum_{t_i\in I(F)}s_i(y)=1$ for all $y$, that is the condition 1) is satisfied identically. Hence the two conditions are equivalent. We note that the orthogonality of the implicants is not related to invertibility but is useful to get a unique set of implicants to represent the graph of $F$. 
\end{proof} 

\begin{remark}
    \emph{Invetibility is harder than SAT}\/: Above theorem states the condition for invertibility of $F$ in terms of implicants of the system of equations $F(X)=Y$. Hence the problem of deciding invertibility is equivalent to representing all solutions of a system of Boolean equations. Consequently, the problem of deciding invertibility is harder than deciding consistency of equation which is equivalent to the SAT problem. 
\end{remark}

\subsection{Representation of $GOE(F)$}
The graph of $F$ is the set of points described by the disjoint union (\ref{Graph}) of satisfying assignments of orthogonal implicants $t_i=r_is_i$ of the system of equations (\ref{SystemXY}). The set $T$ of implicants is a complete OG set of implicants of the equations (\ref{SystemXY}). From this it follows that the set of points $Y$ in $\ftwo^n$ which do not belong to the image of $F$ are precisely those for which $t_i(x,Y)=0$ for any $x$ satisfying $r_i(X)=1$. Hence the set $GOEF(F)$ is characterised as
\[
    GOE(F)=\{Y\in\ftwo^n|s_i(Y)=0\forall i\}
\]
Hence we get

\begin{theorem}\label{ThGOE}
\emph{$GOE(F)$ is the set of all satisfying assignments $y$ of the Boolean equation
\beq\label{GOEeqn}
\sum_{t_i\in T}s_i(Y)=0
\eeq
}
\end{theorem}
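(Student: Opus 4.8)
The plan is to unwind the definition $GOE(F) = \ftwo^n \setminus \im F$ and translate the assertion ``$Y$ belongs to $\im F$'' into a statement about the $Y$-factors $s_i$ alone. First I would invoke the disjoint-union description \eqref{Graph} of the graph of $F$: a pair $(x,Y)$ lies on the graph precisely when $t_i(x,Y) = r_i(x)\,s_i(Y) = 1$ for some index $i$ (by orthogonality of $T$, for exactly one such $i$). Projecting to the $Y$-coordinate, $Y \in \im F$ iff there exist an index $i$ and a point $x \in \ftwo^n$ with $r_i(x) = 1$ and $s_i(Y) = 1$.

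The crucial step is to eliminate the existential quantifier over $x$ by decoupling the two factors. Each $t_i$ is a genuine term of the complete OG set $T$, hence satisfiable; and since it factors as a product $r_i(X)\,s_i(Y)$ of terms in the \emph{disjoint} sets of indeterminates $X$ and $Y$, the factor $r_i$ is by itself a satisfiable term on $\ftwo^n$. Therefore, whenever $s_i(Y) = 1$ one may pick any $x$ with $r_i(x) = 1$ to obtain a graph point $(x,Y)$, so $Y \in \im F$; the converse is immediate. Consequently $Y \in \im F$ iff $s_i(Y) = 1$ for some $i$, i.e. iff $\sum_{t_i \in T} s_i(Y) = 1$, where the summation denotes Boolean disjunction.

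Negating this equivalence, $Y \in GOE(F)$ iff $s_i(Y) = 0$ for every $i$, which in Boolean arithmetic is exactly the condition $\sum_{t_i \in T} s_i(Y) = 0$ of \eqref{GOEeqn}; this proves the theorem. I expect the only point needing care to be the claim that each factor $r_i$ is satisfiable --- i.e. that a proper term in the joint variables $(X,Y)$ splits into proper terms in $X$ and in $Y$ separately --- since it is precisely this that licenses the removal of the quantifier over $x$ and makes the condition depend on $Y$ alone. Everything else is a direct rephrasing of the image-membership argument already carried out in the proof of Theorem~\ref{Thinvertibility}, now recorded as an explicit Boolean equation whose solution set is $GOE(F)$.
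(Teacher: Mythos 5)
Your proof is correct and follows essentially the same route as the paper: characterize $\im F$ via the $Y$-projections of the graph's disjoint-union description and negate. You are in fact slightly more careful than the paper, which passes directly from ``$t_i(x,Y)=0$ for all $x$'' to ``$s_i(Y)=0$'' without remarking that this elimination of $x$ rests on each $r_i$ being a satisfiable term in the $X$-literals alone.
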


Hence we also get an alternative characterization of the invertibility of the map as

\begin{corollary}
\emph{Let $I(F)$ be a complete OG set of implicants $t_i=r_is_i$ of the system (\ref{SystemXY}), then the map $F$ is one to one iff the equation 
\[
\sum_{t_i\in I(F)}s_i(Y)=0
\]
is inconsistent (or unsatisfiable).
}
\end{corollary}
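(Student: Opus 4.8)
The plan is to obtain this statement as an immediate consequence of Theorem~\ref{ThGOE} combined with condition~2 of Theorem~\ref{Thinvertibility}; essentially all the work has already been done. First I would recall that, by Theorem~\ref{ThGOE}, the set $GOE(F)$ is exactly the set of satisfying assignments $Y\in\ftwo^n$ of the Boolean equation $\sum_{t_i\in I(F)}s_i(Y)=0$. Therefore $GOE(F)=\emptyset$ if and only if this equation has no satisfying assignment, i.e. is inconsistent (unsatisfiable). This is just the tautology ``a set described by an equation is empty iff the equation is unsatisfiable,'' so it requires no computation.

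Next I would invoke condition~2 of Theorem~\ref{Thinvertibility}, which asserts that $F$ is one to one precisely when $GOE(F)=\emptyset$. Chaining the two equivalences gives $F$ one to one $\iff GOE(F)=\emptyset \iff \sum_{t_i\in I(F)}s_i(Y)=0$ is unsatisfiable, which is exactly the claim.

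As an alternative route, which I would mention in passing, one can go directly through condition~1 of Theorem~\ref{Thinvertibility} without reference to $GOE(F)$. Writing $g(Y)=\sum_{t_i\in I(F)}s_i(Y)$, condition~1 says $F$ is invertible iff $g(Y)=1$ holds identically on $\ftwo^n$. Over the finite Boolean domain, $g\equiv 1$ is logically equivalent to there being no point $Y$ with $g(Y)=0$, i.e. to the unsatisfiability of the equation $g(Y)=0$; this yields the corollary in one line.

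The only point deserving a word of care is the completeness and orthogonality hypothesis on $I(F)$: it is precisely what guarantees that the factors $s_i(Y)$ detect exactly the $Y$-components appearing in the graph of $F$, so that $g(Y)=1$ genuinely witnesses membership of $Y$ in the image of $F$. But this has already been used in proving Theorems~\ref{Thinvertibility} and~\ref{ThGOE}, so nothing new is needed here. There is no substantive obstacle; the statement is simply a repackaging of the two preceding results, and the proof is a short chain of equivalences.
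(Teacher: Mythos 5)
Your proposal is correct and matches the paper's reasoning: the paper justifies the corollary in one line by observing that unsatisfiability of $\sum_{t_i\in I(F)}s_i(Y)=0$ is the same as $\sum_{t_i\in I(F)}s_i=1$ holding identically, which is condition~1 of Theorem~\ref{Thinvertibility} --- exactly your ``alternative route.'' Your primary route through Theorem~\ref{ThGOE} and the $GOE(F)=\emptyset$ condition is an equivalent chain of the same trivial equivalences, so there is no substantive difference.
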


Clearly the statement of the corollary is same as saying that 
\[
\sum_{t_i\in I(F)}s_i(Y)=1
\]
identically, which is the necessary and sufficient condition for invertibility of $F$.

\subsection{Examples}
\begin{example}
\emph{Consider the map
\[
    F=
    \barr{l}
    x_1x_3\\ x_2x_3\\ x_1x_4\\ x_2x_4\oplus 1
    \earr
\]
The $Y$ components of an OG implicant set of equations $F(X)=Y$ are given by sets
\[
\begin{array}{l}
\{y_1y_2y_3y_4', y_1'y_2'y_3y_4',y_1'y_2'y_3'y_4\}\\
\{y_1y_2'y_3y_4,y_1y_2'y_3'y_4,y_1'y_2'y_3y_4,y_1'y_2'y_3'y_4\}\\
\{y_1'\}\times\{\mbox{all minterms in } y_2,y_3,y_4\}
\end{array}
\]
Hence the $Y$-components of the implicants are minterms in $Y$ variables but form a subset of the set of all minterms. Since sum of all minterms in $Y$ variables equals $1$ the sum of a partial set of minterms in $Y$ cannot equal $1$. Hence the map is not one to one. Following minterms in $Y$ are missing from the set of $Y$ components of implicants above.
\[
    \begin{array}{l}
    y_1'y_2'y_3'y_4',y_1'y_2y_3'y_4,y_1y_2'y_3y_4',y_1y_2y_3'y_4',\\
    y_1y_2y_3'y_4,y_1y_2y_3y_4,y_1y_2'y_3'y_4'
    \end{array}
    \]
Hence $GOE(F)$ is the set of all $Y$ assignments satisfied by these minterms.
}
\end{example}

\begin{example}
\emph{Consider the FSR map $\Phi:(x_1,x_2,x_3)\mapsto(x_2,x_3,x_1\oplus x_2x_3)$. This is obviously one to one by the known Golomb's condition for a non-singular FSR. The equations (\ref{SystemXY}) for this map are
\[
x_2\oplus y_1=0,x_3\oplus y_2=0,x_1\oplus x_2x_3\oplus y_3=0
\]
To find the implicants of this system we choose the ON system $\{x_2,x_2'x_3,x_2'x_3'\}$ and compute implicants by substituting each of the terms.
\begin{enumerate}
\item $t=x_2'x_3'$. $y_1=0$,$y_2=0$,$x_1\oplus y_3=0$. Hence implicants are 
\[
x_2'x_3'y_1'y_2'y_3'x_1', x_2'x_3'y_1'y_2'y_3x_1
\]
The $s(Y)$ components of these implicants are $y_1'y_2'y_3', y_1'y_2'y_3$.
\item $t=x_2'x_3$. $y_1=0$, $y_2=1$, $x_1\oplus y_3=0$. Hence components of implicants in $Y$ are, $y_1'y_2y_3$, $y_1'y_2y_3'$.
\item $t=x_2$. $y_1=1$, $x_3\oplus y_2=0$, $x_1\oplus x_3\oplus y_3=0$. The $Y$ components of implicants are $y_1y_2(y_2\oplus y_3), y_1y_2'(y_2\oplus y_3),y_1y_2(y_2\oplus y_3\oplus 1), y_1y_2'(y_2\oplus y_3\oplus 1)$. This is the set
\[
\{y_1y_2y_3',y_1y_2'y_3,y_1y_2y_3,y_1y_2'y_3'\}
\]
\end{enumerate}
Hence $s(Y)$ factors of a complete set of implicants is the set
\[
\{y_1'y_2'y_3', y_1'y_2'y_3, y_1'y_2y_3, y_1'y_2y_3', y_1y_2y_3', y_1y_2'y_3, y_1y_2y_3,
y_1y_2'y_3'\}
\]
Hence all the eight minterms in $Y$ variables are present whose sum is always $1$. This verifies that the map $\Phi$ is onto since there is no point in the $GOE(\Phi)$}.
\end{example}

\subsection{Computational condition for permutation polynomials in $\ff_{2^n}$}
Consider a polynomial $f(X)$ in one variable with co-efficients in the binary extension field $\ff_{2^n}$
\[
f(X)=\sum_{i=0}^{n-1}a_iX^i
\]
for $a_i$ in $\ff_{2^n}$. Consider a basis $\{b_0,b_1,\ldots,b_n\}$ of $\ff_{2^n}$ over $\ftwo$. Then for any evaluation at $X=x$ in $\ff_{2^n}$ and expansion of $a_i$ in the basis there are expressions
\[
\begin{array}{lcl}
X & = & \sum_{i=0}^{n-1}x_ib_i\\
a_i & = & \sum_{j=0}^{n}c_{ij}b_j\mbox{ for }i=1,\ldots,n
\end{array}
\]
Substitution and organizing as coefficients of basis terms, $f$ has the expression
\[
f=\sum_{i=1}^{n}f_i(x_1,\ldots,x_n)b_i
\]
Hence the co-ordinate functions of $F$ are computed using a choice of a basis to define
\[
F=(f_1(x_1,\ldots,x_n),\ldots,f_n(x_1,\ldots,x_n))
\]
Once these are computed the question whether $f$ is a permutation polynomial is determined by $F$. 

\begin{prop}
    \emph{$f$ is a permutation polynomial in $\ff_{2^n}$ iff $F:\ftwo^n\rightarrow\ftwo^n$ is invertible.
    }
\end{prop}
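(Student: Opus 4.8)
The plan is to realize $F$ as a conjugate of the evaluation map of $f$ by the coordinate isomorphism, so that bijectivity passes between the two. First I would fix the chosen $\ftwo$-basis $b_1,\dots,b_n$ of $\ff_{2^n}$ and let $\phi:\ff_{2^n}\to\ftwo^n$ send $x=\sum_i x_i b_i$ to its coordinate vector $(x_1,\dots,x_n)$. Because $b_1,\dots,b_n$ is a basis, $\phi$ is a well-defined $\ftwo$-linear isomorphism of vector spaces; in particular it is a bijection, with inverse $\phi^{-1}:(x_1,\dots,x_n)\mapsto\sum_i x_i b_i$.

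Next I would check that the Boolean map $F=(f_1,\dots,f_n)$ produced by the ``substitute and collect coefficients of $b_i$'' procedure described above satisfies $F=\phi\circ\hat f\circ\phi^{-1}$, where $\hat f:\ff_{2^n}\to\ff_{2^n}$, $x\mapsto f(x)$, is the evaluation map of the polynomial $f$. This is immediate from the construction: for $a=(a_1,\dots,a_n)\in\ftwo^n$, put $x=\phi^{-1}(a)=\sum_i a_i b_i$; then by the very definition of the $f_i$ we have $f(x)=\sum_i f_i(a_1,\dots,a_n)\,b_i$, and applying $\phi$ returns $(f_1(a),\dots,f_n(a))=F(a)$. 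I would also note that each $f_i$ is a genuine Boolean polynomial function of $n$ variables (over $\ftwo$ every function is a polynomial function), so $F$ has the polynomial form assumed throughout the paper.

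The equivalence then follows formally: a composite of a bijection, a map, and a bijection is a bijection precisely when the middle factor is, so $F$ is a bijection of $\ftwo^n$ iff $\hat f$ is a bijection of $\ff_{2^n}$. To finish I would invoke the definition of a permutation polynomial over a finite field --- $f$ is a permutation polynomial of $\ff_{2^n}$ exactly when its evaluation map $\hat f$ permutes $\ff_{2^n}$ --- together with the fact that on a finite set one-to-one and onto coincide, so that ``$F$ is a bijection'' matches the notion of invertibility for the case $m=n$ fixed earlier. Optionally I would close the loop with Theorem~\ref{Thinvertibility}, observing that once the coordinate functions $f_i$ are computed one may feed $F(X)=Y$ to the implicant machinery and test permutation-polynomiality of $f$ by checking whether $\sum_i s_i(Y)=1$.

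There is no deep obstacle here; the statement is essentially a conjugation argument anchored by the fact that $\phi$ is a basis isomorphism. The only points needing care are (i) well-definedness of the $f_i$, i.e.\ uniqueness of the basis expansion of $f(x)$, which is exactly linear independence of the $b_i$, and (ii) aligning the word ``invertible'' in the conclusion with the earlier definition (namely that $F$ is a permutation of $\ftwo^n$); both are routine.
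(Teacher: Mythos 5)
Your argument is correct, and it makes explicit exactly what the paper leaves implicit: the paper states this proposition without proof, relying on the construction of the coordinate functions $f_i$ via the basis expansion, which is precisely the conjugation $F=\phi\circ\hat f\circ\phi^{-1}$ you spell out. Nothing is missing; your two points of care (uniqueness of the basis expansion and matching the definition of invertibility for $m=n$) are the right ones and are handled correctly.
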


\subsection{Condition for one to one-ness of a more general map}
Theorem (\ref{Thinvertibility}) gives the necessary and sufficient condition for invertibility of the map $F:\ftwo^n\mapsto\ftwo^m$ for $m=n$. When $m\neq n$, two cases can occur. For $m<n$ there must be at least two distinct $n$-tuple assignments $x$, $\tilde{x}$ such that $F(x)=F(\tilde{x})$. Hence $F$ cannot be one to one. Hence we assume that $m\geq n$. Then the system of equations (\ref{SystemXY}) is of the form
\beq\label{SystemXYm}
f_i(X_1,\ldots,X_n)=Y_i,i=1,2,\ldots,m
\eeq
abbreviated as the system $F(X)=Y$. Let $T=\{t_i(X,Y)\}$ denote a complete set of OG implicants of this system. Each $t_i$ has factor terms as $t_i(X,Y)=r_i(X)s_i(Y)$. Then we have the following theorem.

\begin{theorem}\label{Thmgeqn}
    \emph{$F$ is one to one iff the set of all terms $\{s_i(Y)\}$ for $t_i$ in $T$ is a distinct set of minterms of cardinality $|\{s_i\}|=2^n$.
    }
\end{theorem}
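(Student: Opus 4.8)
The plan is to mirror the argument of Theorem \ref{Thinvertibility}, but carefully separate the two features that the conclusion now packages together: that each $s_i(Y)$ is an individual minterm, and that there are exactly $2^n$ of them. First I would recall from the completeness and orthogonality of $T$ that the graph of $F$ is the disjoint union $\bigsqcup_i S(t_i)$, with each $t_i = r_i(X) s_i(Y)$, so that $(x,y)$ lies on the graph iff there is a unique $i$ with $r_i(x) = s_i(y) = 1$. The key structural observation, which I would establish as the first step, is that when $m \ge n$ the map $F$ is one to one precisely when for each $x \in \ftwo^n$ the fiber $\{x\} \times \{F(x)\}$ is a single point that is \emph{not} shared by any other $\tilde x$; equivalently, the projection of the graph onto the $X$-coordinates is a bijection $\ftwo^n \to \ftwo^n$ onto all of $\ftwo^n$, and distinct $x$ give distinct $y = F(x)$.

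The second step is to translate this into the implicant language. Because $F$ is a (total) function, the $r_i(X)$ factors, restricted to satisfying assignments, partition $\ftwo^n$: every $x$ satisfies exactly one $r_i$. I would argue that $F$ being injective forces each block $S_X(r_i) = \{x : r_i(x)=1\}$ to map to a \emph{single} $y$-value — otherwise two $x$'s in the same block would, via the shared implicant, be forced to the same $y$, contradicting injectivity only if... here is where I must be careful: two $x$'s in the same $r_i$-block are paired in the graph with exactly the $y$'s satisfying $s_i$, so if $s_i$ has more than one satisfying assignment, then a single $x$ is paired with several $y$'s, contradicting that $F$ is a function. So in fact $s_i$ must be a single minterm purely from $F$ being a function — injectivity is not yet used. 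Conversely each such minterm $s_i$ is hit by the $x$'s in $S_X(r_i)$. Injectivity then enters as the statement that the assignment $x \mapsto (\text{the minterm } s_i \text{ with } r_i(x)=1)$ is injective on $\ftwo^n$, which, since the $r_i$-blocks already tile $\ftwo^n$, is equivalent to: distinct blocks carry distinct minterms, i.e. the multiset $\{s_i\}$ is a \emph{set} (no repeats) — and then, counting, $|\{s_i\}| = $ number of blocks $= $ number of distinct $y$-values in the image $= |\ftwo^n| = 2^n$ by injectivity onto a $2^n$-element image inside $\ftwo^m$.

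For the converse direction I would assume $\{s_i\}$ is a set of $2^n$ distinct minterms and reconstruct injectivity: each $x$ determines a unique block hence a unique minterm $s_i$, hence a unique $y = F(x)$ consistent with $F$ being a function; if $F(x) = F(\tilde x)$ then $x$ and $\tilde x$ yield the same minterm $s_i$, but since there are $2^n$ distinct minterms among the $s_i$ and the $r_i$-blocks tile the $2^n$-element set $\ftwo^n$, a pigeonhole/cardinality count forces each block to be a singleton, so $x = \tilde x$. I expect the main obstacle to be bookkeeping the precise logical role of three hypotheses — $F$ total, $F$ a function, $F$ injective — since the "$s_i$ is a minterm" part comes for free from $F$ being a well-defined function while the "exactly $2^n$ distinct ones" part is exactly the injectivity-plus-totality content; stating this cleanly (perhaps via a short lemma that $r_i$-blocks partition $\ftwo^n$ and each maps to exactly one minterm) is where care is needed, and I would also want to double-check the edge behavior when $m = n$ so that the theorem specializes correctly to Theorem \ref{Thinvertibility}.
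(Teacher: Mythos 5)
Your proposal is correct and follows essentially the same route as the paper's proof: both arguments use the factorization $t_i = r_i(X)s_i(Y)$ to show that the $r_i$-blocks tile $\ftwo^n$, that one-to-one-ness forces every $r_i$ and $s_i$ to be a minterm with the $s_i$ pairwise distinct, and then count $2^n$ of them, with the converse being the same cardinality/pigeonhole argument on the blocks. Your bookkeeping is in fact slightly sharper than the paper's --- the paper attributes the absence of free variables in $s_i$ to injectivity, whereas, as you observe, that part already follows from $F$ being a well-defined function --- but this is a refinement of the same argument rather than a different approach.
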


\begin{proof}
    \emph{Necessity}\/: Consider the graph of the system (\ref{SystemXYm}) which is the set 
    \[
    G=\cup_{t_i\in T}\{(x,y)|t_i(x,y)=r_i(x)s_i(y)=1\}
    \]
    Due to $T$ being an OG set, for each point $(x,y)$ in $G$ there is a unique $i$ such that $t_i(x,y)=1$. Which implies $r_i(x)=s_i(y)=1$. Since $F$ is one to one, no two points such as $(x,y)$, $(\tilde{x},y)$ exist in $G$ with the same $y$ component. Hence there are no free variables in $r_i$ as well as in $s_i$ factors of any implicant $t_i$. Hence since the set $T$ is OG, each of the sets $\{r_i\}$ and $\{s_i\}$ are distinct minterms. Further every point $x$ in $\ftwo^n$ is a unique satisfying assignment of $r_i(X)$ for a unique $i$. Hence cardinality of $\{s_i\}$ equals $2^n$. 
    
    \emph{Sufficiency}\/: If $\{s_i\}$ is a set of minterms of cardinality $2^n$, then
    \[
    \sum_{t_i\in T}s_i(Y)=1
    \]
    has a solution set of points $y$ in $\ftwo^m$ of cardinality $2^n$. (This is the image set of $F$ in $\ftwo^m$). For each of these points $y$ in $\ftwo^m$, there is a unique $x$ in $\ftwo^n$ such that $(x,y)$ is in $G$ the graph of $F$. Hence $F$ is one to one.
\end{proof}

For $m=n$ the assertion that $s_i$ are $2^n$ distinct minterms in $y$ implies that the sum
\[
\sum_{t_i\in T}s_i(Y)=1
\]
which is the result of Theorem \ref{Thinvertibility}.

We can have an analogous characterization of points $y$ in $\ftwo^m$ for which points $(x,y)$ do not belong to the set of solutions (or the graph) of equations (\ref{SystemXYm}) for maps when $m>n$. These cannot be termed as $\GOE$ of the map $F$ since a recursive dynamical trajectory cannot be defined with these points as initial conditions. Hence we shall call this set of points as the \emph{Complement of Image} of $F$, denoted $\COI(F)$

\begin{corollary}
    A point $y$ in $\ftwo^m$ is in $\COI$ of $F$ iff $y$ does not satisfy $s_i(Y)=1$ for one of the factors $s_i$ of an implicant $t_i$ of (\ref{SystemXYm})
\end{corollary}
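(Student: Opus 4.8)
The plan is to mirror, in the rectangular setting $m>n$, the argument that established Theorem~\ref{ThGOE} for the square case. Recall that by definition $\COI(F)$ is the set of $y\in\ftwo^m$ for which $F(x)=y$ has no solution $x\in\ftwo^n$, equivalently the set of $y$ such that no point $(x,y)$ lies on the graph $G$ of the system (\ref{SystemXYm}). Since $T$ is a complete set of OG implicants of (\ref{SystemXYm}), the graph is the disjoint union $G=\bigsqcup_{t_i\in T}S(t_i)$ with $S(t_i)=\{(x,y)\mid r_i(x)s_i(y)=1\}$, so $(x,y)\in G$ if and only if $r_i(x)=s_i(y)=1$ for a unique $i$. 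The whole statement is then just an unpacking of this description projected onto the $Y$-coordinate.

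First I would prove the ``only if'' direction. Suppose $y\in\COI(F)$ and, for contradiction, that $s_j(y)=1$ for some index $j$. Since $t_j=r_j s_j$ is a genuine implicant of (\ref{SystemXYm}) — a product of literals that is not identically zero, hence involving no variable together with its complement — its $X$-factor $r_j$ is a consistent product of $X$-literals and therefore satisfiable; choose $x$ with $r_j(x)=1$. Then $r_j(x)s_j(y)=1$, so $(x,y)\in S(t_j)\subseteq G$, i.e. $F(x)=y$, contradicting $y\in\COI(F)$. Hence $s_i(y)=0$ for every $i$, which is exactly the assertion that $y$ does not satisfy $s_i(Y)=1$ for any factor $s_i$ of an implicant $t_i$.

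For the ``if'' direction, suppose $s_i(y)=0$ for all $i$. Then for every $x\in\ftwo^n$ and every $i$ we have $r_i(x)s_i(y)=0$, so $(x,y)\notin S(t_i)$ for all $i$; since $T$ is a complete set of implicants this forces $(x,y)\notin G$. As $x$ was arbitrary, $F(x)=y$ has no solution, so $y\in\COI(F)$. This completes the equivalence, and in particular identifies $\COI(F)$ with the solution set of $\sum_{t_i\in T}s_i(Y)=0$, the natural analogue of (\ref{GOEeqn}) in the case $m>n$.

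The only point needing care — and the step I expect to be the main (mild) obstacle — is the satisfiability of each $s_i$ (really of $r_i$) invoked in the forward direction: it rests on the fact that an element of $T$ is by convention a genuine, non-vanishing literal-product implicant, so its $X$-part and $Y$-part are each individually satisfiable, and a $y$ with $s_i(y)=1$ always extends to an actual graph point. I would also remark that, in contrast to Theorem~\ref{Thmgeqn}, this characterization of $\COI(F)$ requires no minterm or cardinality hypothesis on $\{s_i\}$: it holds for arbitrary $m\ge n$ and makes no assumption that $F$ is one to one, since it merely describes which $y$ fail to be hit by $F$.
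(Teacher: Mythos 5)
Your proof is correct and takes essentially the same approach as the paper: both identify the image of $F$ with $\bigcup_i\{y : s_i(y)=1\}$ using completeness and orthogonality of the implicant set, so that $\COI(F)$ is its complement. You spell out the one detail the paper leaves implicit — that $r_j$, being a consistent product of literals, is satisfiable, so any $y$ with $s_j(y)=1$ lifts to an actual graph point — and you correctly observe that no one-to-one hypothesis is needed, whereas the paper simply cites the preceding theorem.
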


\begin{proof}
    By the previous theorem image of $F$ is the union of the set of all satisfying assignments of $\{s_i(Y)=1|t_i\in T\}$. Equivalently, $y$ is in the $\COI$ iff the equations $F(X)=y$ are not consistent.
\end{proof}

\subsection{Condition for unique solution of a Boolean system}
As a special case of the problem of deciding one to one ness of a more general map considered above, we now discuss the problem of deciding unique solution of a system of Boolean equations over the binary field or the two element Boolean algebra $B_0$. The problem we want to address now is to give a condition for the existence of a unique solution of the system of equations
\beq\label{BoolSys}
f_i(X_1,\ldots,X_n)=0\mbox{ for }i=1,\ldots,m
\eeq
Where $f_i$ are Boolean functions mapping $B_0^n$ to $B_0$ for the two element Boolean ring $\{0,1\}$ with usual addition $\oplus$, multiplication $.$ and complement $'$. The solution of the system is an assignment in $B_0$ of the indeterminates $X_i$ such that the equations are satisfied. In this special case we get the following statement of the condition for uniqueness of solutions of (\ref{BoolSys}).

\begin{theorem}
    Let $T$ denote an OG and complete set of implicants of the system (\ref{BoolSys}). Then the system has a unique solution iff $T$ is a set consisting of a single minterm in all variables of the system.
\end{theorem}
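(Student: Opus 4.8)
The plan is to reduce the statement to an elementary cardinality count on the solution set, using the representation of solutions by a complete OG set of implicants recalled in the appendix. First I would note that if $T$ is a complete set of implicants of the system (\ref{BoolSys}), then the solution set $V \subseteq B_0^n$ of the system is exactly $\bigcup_{t\in T} S(t)$, where $S(t)$ denotes the set of assignments in $B_0^n$ satisfying the term $t$; and since $T$ is OG, i.e. $t_it_j = 0$ for $i \neq j$, this union is disjoint, so that $|V| = \sum_{t\in T} |S(t)|$.

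Next I would analyze $|S(t)|$ for a single term. A term $t$ is a product of literals $X_i$ or $X_i'$; assuming, as is standard for implicants, that $t$ contains no complementary pair $X_i, X_i'$, if $t$ involves exactly $k$ of the $n$ indeterminates then $|S(t)| = 2^{n-k} \ge 1$, with equality $|S(t)| = 1$ precisely when $k = n$, i.e. when $t$ is a minterm in all $n$ variables of the system. (If one permits a contradictory term then $S(t) = \emptyset$; such a $t$ contributes nothing to $V$ and can be deleted from $T$ without destroying completeness, so without loss of generality every $t \in T$ is a proper term and every summand $|S(t)|$ is a positive integer.)

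Combining these, the system (\ref{BoolSys}) has a unique solution iff $\sum_{t\in T} |S(t)| = 1$. Since each summand is a positive integer, this holds iff $T$ consists of exactly one term $t$ and that term satisfies $|S(t)| = 1$, which by the previous step means $t$ is a minterm in all variables of the system. Conversely, if $T = \{t\}$ with $t$ such a full minterm, then $V = S(t)$ is a singleton and uniqueness holds. This gives both directions, and the inconsistent case is covered automatically: an inconsistent system has $T = \emptyset$, hence $|V| = 0 \neq 1$, so it has no unique solution.

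I do not anticipate a genuine obstacle; the only care needed is in the degenerate cases (ruling out $T=\emptyset$, normalizing away improper terms) and in stating precisely that ``a single minterm in all variables of the system'' means one term that is a product of $n$ literals, one per indeterminate. The mathematical content is essentially the ``$Y$ fixed'' specialization of Theorem \ref{Thinvertibility} and Theorem \ref{Thmgeqn}: here the $s_i(Y)$ factors play no role and only the count of solutions in the $X$ variables survives, so the condition $\sum_i s_i(Y) = 1$ degenerates to $|V| = 1$.
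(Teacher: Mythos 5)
Your argument is correct and follows essentially the same route as the paper's own proof: represent the solution set as the disjoint union $\bigsqcup_{t\in T}S(t)$ using completeness and orthogonality, and observe that uniqueness forces a single implicant with no free variables, i.e.\ a single minterm in all $n$ variables. Your version is in fact slightly more careful than the paper's (explicit cardinality count $|S(t)|=2^{n-k}$, explicit sufficiency direction, and the degenerate cases $T=\emptyset$ and contradictory terms), but the underlying idea is identical.
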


\begin{proof}
    By the property of completeness of the set of implicants $T$ of the Boolean system (\ref{BoolSys}), all solutions of the system are represented by
    \[
    S=\cup_{t_i\in T}\{x\in B_0^n|t_i(x)=1\}
    \]
    Since $t_i$ are also orthogonal, the individual satisfying assignments 
    \[
    S(i)=\{x|t_i(x)=1\}
    \]
    of implicants are distinct. Hence $S$ consists of a unique solution implies that each of $S$ is a single point. Hence there is a single implicant $t_i$ whose satisfying assignment is a single point in the solution set $S$ i.e. $t_i$ does not have a free variable assignments for any of its variables. Hence $T$ is a set of a single minterm in all the variables of the system.
\end{proof}

Above theorem gives the condition for uniqueness of a solution of a Boolean system in terms of implicants of the system. Hence it is necessary to have an efficient algorithm for computing the implicants of Boolean systems. Efforts in developing such algorithms are described in the next section.

\section{Implicant based algorithms}
For a system of Boolean equations in $n$ Boolean variables $X_i$ such as  
\beq\label{Boolsys}
f_i(X_1,\ldots,X_n)=0,i=1,\ldots,m
\eeq
an implicant $t$ is a term in literals of variables $X_i$ which satisfies $t(x)=1$ whenever $x=(x_1,\ldots,x_n)$ is a satisfying assignment of (\ref{Boolsys}). A set of implicants $I$ is called a complete set of implicants of (\ref{Boolsys}) if for every satisfying assignment $x$ there is an implicant $t$ in $I$ such that $t(x)=1$. Alternatively, an implicant of a Boolean function given as a product of factors
\beq\label{Prodfun}
H=\prod_{i=1}^{m}h_i(X_1,\ldots,X_n)
\eeq
is a term $t$ which satisfies $t(x)=1$ whenever $h_i(x)=1$ for all $i$. A set of implicants of $I$ of $H$ is called complete if for every assignment $H(x)=1$ there is a $t$ in $I$ such that $t(x)=1$. A parallel algorithm for computing a complete OG set of implicants of a Boolean system of $m$-equations in $n$-variables or that of functions $H$ given as products of Boolean functions was announced in \cite{Sule2017}. Both of these problems can be mapped into eachother by taking complements $h_i=f_i'$. A complete set of OG implicants compactly (i.e. efficiently in terms of space required) represents all satisfying assignments of the Boolean system (\ref{Boolsys}) or the function (\ref{Prodfun}). Computation of such a representation of all satisfying assignments is far more efficient than brute force enumeration of all satisfying assignments due to the fact that OG implicants represent a very large number of assignments in their free variables (for which arbitrary assignments can be made). Hence a very large sets of satisfying assignments can be represented by a small set of OG implicants. Hence deciding invertibility of a map $F$ or computation of the $GOE(F)$ by using the implicant based parallel algorithm turns out to be an efficient scalable non-brute force methodology.

\subsection{Computation of an Implicant set by decomposition}
A system of Boolean equations such as (\ref{BoolSys}) or a complex Boolean function such as (\ref{Prodfun}) are practically often available in terms of functions $f_i$ (and $h_i$) which are a sparse (i.e. have a very small number of variables compared to the total number of variables in the system or $H$). Also due to the sparseness there exist clusters (subsets) of equations or factors $h_i$ with possibly no overlapping variables. Hence implicant computations for these clusters can be carried out independently or in parallel. This is the decomposition which makes the implicant computation efficient. There are many different ways of decomposition which we shall not address in this paper. Different decomposition heuristics can result in efficient computation of complete sets of OG implicants as shown in a practical case study of solving a cryptanalysis problem in practically feasible time \cite{SPACE2019}. 

\begin{remark}
It is worth considering the computational achievement reported in \cite{SPACE2019} in relation to brute force key search in Cryptanalysis. For instance the only case study of successful brute force search in the past was for the 56 bit key of the DES algorithm. It is well known that 64 bit brute force search is beyond the capability of current supercomputers in realistic time. While the cryptanalysis in \cite{SPACE2019} using the implicant based solver results in a key search of an 80 bit cipher Bivium in a feasible time of 49 hours and 5 terabytes of memory. A brute force search for 80 bit unknowns otherwise would have been infeasible.   
\end{remark}

\subsection{Algorithm for computation of a complete OG set of implicants}
We now present an algorithm for parallel decomposition of the Boolean system (\ref{Boolsys}) or a product (\ref{Prodfun}) for computation of a complete and OG set of implicants. Consider a Boolean function $f$ and let $S(f)$ denote its set of satisfying assignments (i.e. $f(x)=1$ iff $x$ belongs to $S$). If $I(f)$ is a complete set of OG implicants of $f$, then we have
\[
S(f)=\bigsqcup_{t\in I(f)}S(t)
\]
the above union is disjoint because the implicants are OG (hence for $t,s\in I(f)$, $ts=0$ which is equivalent to $S(t)\cup S(s)=\emptyset$). An important operation of ratio of a Boolean function $f$ with a term $t$ is defined as the Boolean function denoted
\[
f/t=f(t(x)=1)
\]

\subsubsection{Description of $I(fg)$}
Let $I(f)$ and $I(g)$ be complete sets of OG implicants of $f$ and $g$ respectively. Then we have
\begin{lemma}
\[
\begin{array}{lcl}
I(fg) & = & \bigsqcup_{t\in I(f)}\{ts,s\in I(g/t)\}\\
 & = & \bigsqcup_{s\in I(g)}\{ts,t\in I(f/s)\}
\end{array}
\]    
\end{lemma}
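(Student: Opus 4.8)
The plan is to establish the first equality; the second follows by the symmetric argument swapping the roles of $f$ and $g$. I would argue by mutual inclusion at the level of satisfying assignments, using the defining property that a complete OG implicant set of a Boolean function $h$ partitions $S(h)$ into the disjoint blocks $S(t)$ for $t\in I(h)$, together with the key identity $S(h/t) = \{x : t(x)=1 \text{ and } h(x)=1\}$ viewed appropriately (more precisely, $x$ restricted to the free variables of $t$, with $t(x)=1$, satisfies $h$ iff it satisfies $h/t$). The right-hand side is a candidate set of terms $\{ts : t\in I(f),\ s\in I(g/t)\}$, and I must check three things: (i) each $ts$ really is an implicant of $fg$; (ii) the collection is complete for $fg$; (iii) the collection is orthogonal.

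For (i): if $ts(x)=1$ then $t(x)=1$, so $x$ satisfies $f$ since $t\in I(f)$; moreover $s(x)=1$ with $s\in I(g/t)$, and since $t(x)=1$ this forces $g(x)=1$ by the definition of the ratio $g/t$. Hence $(fg)(x)=1$, so $ts$ is an implicant of $fg$. For (ii), completeness: take any $x$ with $f(x)g(x)=1$. Since $I(f)$ is complete there is a unique $t\in I(f)$ with $t(x)=1$ (unique by orthogonality). Because $t(x)=1$ and $g(x)=1$, the point $x$ (projected to the free variables of $t$) is a satisfying assignment of $g/t$, so by completeness of $I(g/t)$ there is an $s\in I(g/t)$ with $s(x)=1$; then $ts(x)=1$, as required. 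For (iii), orthogonality: suppose $t_1 s_1$ and $t_2 s_2$ are two distinct terms in the family with a common satisfying assignment $x$. Then $t_1(x)=t_2(x)=1$, so $t_1=t_2=:t$ by orthogonality of $I(f)$; hence $s_1,s_2\in I(g/t)$ with $s_1(x)=s_2(x)=1$, forcing $s_1=s_2$ by orthogonality of $I(g/t)$ — contradiction. So the family is OG, and the union is genuinely disjoint, justifying the $\bigsqcup$ notation.

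The main obstacle I anticipate is bookkeeping about \emph{which variables} the terms $t$, $s$, and the ratio $g/t$ depend on, and making the identification between a satisfying assignment of $g$ with $t(x)=1$ and a satisfying assignment of $g/t$ fully precise. The term $t$ pins down the values of its own literals; $g/t$ is the restriction of $g$ by those values and lives on the remaining variables, so a satisfying assignment of $ts$ must be read as: $t$'s variables set as in $t$, and the rest set to satisfy $s$ (which is an implicant over variables disjoint-enough from $t$ that the product $ts$ is a consistent term). I would state this once carefully — e.g. that for a term $t$ and function $g$, $S(tg) = \{x : t(x)=1\} \cap \{x : (g/t)(x)=1\}$ where on the left we extend $g/t$ trivially to all variables — and then the three verifications above become routine. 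One should also note the degenerate cases ($g/t \equiv 0$, i.e. $t$ is inconsistent with $g$, contributing no terms; or $g/t\equiv 1$, contributing the single term $t$ itself), which the formulas handle automatically once $I(0)=\emptyset$ and $I(1)=\{1\}$ are adopted as conventions.
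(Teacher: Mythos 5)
Your proposal is correct, and in fact the paper states this lemma with no proof at all, so your argument supplies what the text omits. The three verifications you carry out --- that each product $ts$ is an implicant of $fg$, that the family is complete, and that it is orthogonal --- are exactly the substantive content needed, and your completeness and orthogonality steps correctly exploit the uniqueness of the $t\in I(f)$ with $t(x)=1$. Your care about variable bookkeeping is warranted: since $g/t$ is obtained by substituting the values forced by $t$, the term $s$ involves only variables not fixed by $t$, so $ts$ is a consistent term; this is precisely the point the paper's definition of the ratio leaves implicit. One remark worth adding explicitly: a complete OG set of implicants of a function is not unique, so the chain of equalities in the statement should be read as ``each right-hand side \emph{is} a complete OG implicant set of $fg$'' rather than as a literal identity of sets of terms (the two right-hand sides need not coincide term-by-term, nor coincide with some independently chosen $I(fg)$). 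Your proof establishes exactly that reading, which is what Theorem \ref{Th:decomposition} subsequently uses, and your conventions $I(0)=\emptyset$, $I(1)=\{1\}$ for the degenerate ratios are the right way to make the formula uniformly valid.
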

As a special case then it follows that if $f$ and $g$ have non-overlapping variables then
\[
I(fg)=I(f)\times I(g)=\{ts,t\in I(f),s\in I(g)\}
\]
This formula shows that if $f$, $g$ have non overlapping variables then a complete OG set of implicants can be computed in parallel by independent computations of $I(f)$ and $I(g)$ and then forming the product set $I(fg)$. Hence we have the

\begin{theorem}\label{Th:decomposition}
Let $H$ be described in terms of factors
\[
H=(\prod_{k=1}^{N}f_k)H_1
\]
where $f_k$ are factors with mutually non-overlapping variables between $f_k$ and $f_l$ for $k\neq l$, then 
\[
I(H)=\{t_1t_2\ldots t_Ns,t_i\in I(f_i)s\in H_1/\prod_{i} t_i\}
\]
\end{theorem}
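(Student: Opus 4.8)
The plan is to induct on the number of "detached" factors $N$, using the previously established Lemma on $I(fg)$ as the engine. The base case $N=1$ says that if $H = f_1 H_1$ with $f_1$ and $H_1$ in general position (no constraint assumed between them, since $H_1$ may share variables with itself but $f_1$'s variables are disjoint from the rest), then $I(H) = \{t_1 s : t_1 \in I(f_1),\, s \in I(H_1/t_1)\}$. This is exactly the first displayed identity of the Lemma with $f = f_1$, $g = H_1$, once we observe that completeness and orthogonality of $I(f_1)$ are available (the implicant algorithm produces such a set for any Boolean function). So the base case is immediate.

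For the inductive step, suppose the claim holds for $N-1$ detached factors. Write $H = f_1 \cdot \big( (\prod_{k=2}^N f_k) H_1 \big)$ and set $g := (\prod_{k=2}^N f_k) H_1$. Since $f_1$ has variables disjoint from all of $f_2,\ldots,f_N$ and from $H_1$, it has variables disjoint from $g$; hence the special case of the Lemma for non-overlapping variables gives $I(H) = I(f_1) \times I(g) = \{t_1 s' : t_1 \in I(f_1),\, s' \in I(g)\}$. Now apply the induction hypothesis to $g$ (which has $N-1$ detached factors $f_2,\ldots,f_N$ and residual part $H_1$): $I(g) = \{t_2\cdots t_N s : t_i \in I(f_i),\, s \in I(H_1/\prod_{i\ge 2} t_i)\}$. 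Substituting and using that $f_1$'s variables do not meet $H_1$, so conditioning $H_1$ on $t_1$ does nothing — i.e. $H_1/\prod_{i\ge 2}t_i = H_1/\prod_{i\ge 1}t_i$ — we collapse the nested description into $I(H) = \{t_1 t_2\cdots t_N s : t_i \in I(f_i),\, s \in I(H_1/\prod_i t_i)\}$, which is the claim.

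The main obstacle is bookkeeping rather than depth: one must be careful that the ratio operation $H_1/t$ only "sees" the literals of $t$ that actually involve variables of $H_1$, so that the products $t_1\cdots t_N$ appearing inside $H_1/\prod_i t_i$ may be replaced by the sub-product over those $f_i$ whose variables meet $H_1$ without changing the resulting function; and conversely that orthogonality is preserved when forming products of implicants from disjoint-variable blocks (two products $t_1\cdots t_N s$ and $\tilde t_1\cdots \tilde t_N \tilde s$ are orthogonal iff they differ in at least one block, which follows from orthogonality within each $I(f_i)$ and within the family $\{I(H_1/\prod t_i)\}$). I would also note explicitly that completeness is inherited at each step: every satisfying assignment of $H$ restricts to a satisfying assignment of each $f_k$ and of $H_1$, picking out a unique block-index in each factor by orthogonality, and the corresponding product term covers it. With these two invariants (completeness and orthogonality) carried through the induction, the theorem follows.
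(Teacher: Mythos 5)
Your overall strategy (induction on $N$ driven by the lemma describing $I(fg)$) is the same as the paper's, but your inductive step contains a genuine gap. You set $g=(\prod_{k=2}^{N}f_k)H_1$ and then invoke the disjoint-variables special case $I(f_1g)=I(f_1)\times I(g)$ on the grounds that ``$f_1$ has variables disjoint from all of $f_2,\ldots,f_N$ \emph{and from} $H_1$.'' The hypothesis of the theorem does not grant the italicized part: only the factors $f_k$ are required to be mutually variable-disjoint, while $H_1$ is explicitly allowed to share variables with any of the $f_k$ (this is the whole point of keeping the ratio $I(H_1/\prod_i t_i)$ in the conclusion rather than the unconditioned $I(H_1)$; if $H_1$ were disjoint from every $f_k$ the statement would collapse to a trivial product of independent implicant sets). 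Since $H_1\subseteq g$ may meet the variables of $f_1$, the identity $I(f_1g)=I(f_1)\times I(g)$ is unjustified, and the subsequent simplification $H_1/\prod_{i\ge 2}t_i=H_1/\prod_{i\ge 1}t_i$ (``conditioning $H_1$ on $t_1$ does nothing'') fails for the same reason. You even acknowledge in your base case that $f_1$ and $H_1$ are ``in general position,'' which contradicts the assumption you then make in the step.

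The repair is not difficult, and there are two routes. One is to keep your peeling order but use the \emph{general} form of the lemma: $I(f_1g)=\{t_1s':\,t_1\in I(f_1),\,s'\in I(g/t_1)\}$, observe that $g/t_1=(\prod_{k\ge 2}f_k)(H_1/t_1)$ because the $f_k$ for $k\ge 2$ do not see the variables of $t_1$, and then apply the induction hypothesis to this conditioned product, which telescopes the ratios into $I(H_1/\prod_i t_i)$. The other is the paper's route: group $f=\prod_{i=1}^{N-1}f_i$, which \emph{is} genuinely variable-disjoint from $f_N$, apply the $N=2$ case to $H=f\cdot f_N\cdot H_1$ so that the residual $H_1$ is handled throughout by the ratio $I(H_1/ts)$, and unfold $t=\prod_{i=1}^{N-1}t_i$ by induction. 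Your remarks about completeness and orthogonality being preserved under products over disjoint blocks are correct and worth keeping, but the step as written needs the correction above before the induction closes.
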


\begin{proof}
    By induction. Let $N=2$ and $H=fgh$ where $f$ and $g$ have non-overlapping variables. Then by the above lemma and the special expression of $I(fg)$ it follows that
    \[
    I(H)=\{tsr,t\in I(f),s\in I(g),r\in I(H/ts)\}
    \]
    Hence the theorem holds for $N=2$.

    Assuming the induction step for $N-1$, define
    \[
    f=\prod_{i=1}^{N-1}f_i
    \]
    so that $H=ff_kH_1$. $f$ and $f_k$ have non-overlapping variables hence by the $N=2$ case
    \beq\label{recursiveformula}
    I(H)=\{tsr,t\in I(f),s\in I(f_k),r\in I(H_1/ts)\}
    \eeq
    Hence the theorem is true for $N$ since
    \[
    t=\prod_{i=1}^{N-1}t_i
    \]
\end{proof}
The formula (\ref{recursiveformula}) is recursive and at each step decomposition and computation of implicants is feasible in parallel. Following algorithm is developed to make use of this recursive structure and parallel decomposition in computation of the OG implicant set of a Boolean function $H$.

\subsubsection{Algorithm for computation of $I(H)$ using parallel decomposition}
We now describe the algorithm for computation of implicants of a function $H$ with a large sparse factors such as in (\ref{Prodfun}) by parallel computation. First we assume that a function 
\[
\mbox{\textsc{Implforsimple}}(f)
\]
is available for computation of a complete OG set of implicants $I(f)$ of a function $f$ in small number of variables (defined with a bound $m$). For instance, this can be done by evaluating the minterms $\mu$ in the variables of $f$ and choosing those with $f/\mu=1$ as the OG implicants. Other minters $\mu$ then satisfy $f/\mu=0$. A parallel Algorithm for this function is proposed in \cite{Sule2017} and has parallel time complexity of $O(n)$ for $n$ the number of variables in $f$. The algorithm is described as Algorithm \ref{Decomposition} below. Although the complexity of the algorithm \textsc{Implforsimple} is exponential (i.e. $O(n)$ for large $n$, the function is expected to be called for sufficiently small $n$ during independent (parallel) computations in the following algorithm at the cost of space complexity.

\begin{algorithm}\label{Decomposition}
\caption{Algorithm for computing a complete OG set of implicants of a function by decomposition in terms of non-overlapping variables}
\begin{algorithmic}[1]
\Procedure{Implicant}{Compute a complete OG set of implicants of $H$}
\State \textbf{Input}
\State 1. $H$ given in terms of its sparse factors $h_i$ as in (\ref{Prodfun})
\State 2. The bound $m$ for the function $\mbox{\textsc{Implforsimple}}(f)$
\If {number of variables in $H$ is $\leq m$} 
\State Return 
    \[
    I(H)=\mbox{\textsc{Impforsimple}}(H)
    \]
    \Else
    \Repeat
    \State Select a clusters of factors $\{f_k\}$ with non-overlapping variables 
    \State such that
    \[
    H=H_1\prod_{k}f_k
    \]
    where $H_1$ may have overlapping variables with some of the $f_k$ such that each $f_k$ has at most $m$ variables.
    \State Compute complete sets of OG implicants $T_k$ of each $f_k$ \State independently (parallely) by calling the function 
    \[
    T_k=\mbox{\textsc{Implforsimple}}(f_k)
    \]
    \State  
    \For {$t=\prod {T_k}$}
    \State \% Recursive call
    \[
    I(H)=\bigcup_{s\in\mbox{\textsc{Implicant}}(H_1/t)}\{ts\}
    \]
    \EndFor
    \State $H\leftarrow H_1/t$
    \Until Number of variables in $H$ is $\leq m$.  
    \State Return $I(H)$ \% $(I(H)$ is a complete OF set of implicants of $H$.
\EndIf
\EndProcedure
\end{algorithmic}
\end{algorithm}

The algorithm clearly shows parallel computation of implicants by decomposing equations into clusters of equations with non-overlapping variables. Hence the algorithm works much better than brute force search. Although the complexity is still exponential in the number of variables, the algorithm facilitates efficient decomposition of large systems with sparse functions or equations. Other alternative heuristics than finding clusters of equations with non-overlapping variables also exist for efficient decomposition of equations to find the implicant set. These are reported in \cite{SPACE2019}. Such heuristics can be used for efficient decomposition and improving scalability of computation of implicants.

\section{One to one-ness in argument variables}
In the previous section we described one to one-ness of a map $F:\ftwo^n\rightarrow\ftwo^m$ in terms of the implicants of the equations $F(X)=Y$ of the graph of $F$. For sparse maps $F$ the decomposition of $F$ in compositions of maps $F_1\circ F_2$ described in Section \ref{Decomposition} can be utilized before applying Theorem \ref{Thmgeqn} to decide one to one-ness of each of the maps $F_i$ in $F=F_1\circ F_2$. Hence such a sparse decomposition can make the computation of deciding one to one-ness of a complex map into parallel computations.
In this section we consider another approach to describe one to one-ness of the map $F$ which is in terms of implicants in argument variables $X_i$ and their associates $\tilde{X}_i$ as described in the following lemma.

\begin{lemma}\label{Leminxandxtilde}
   The map $F:\ftwo^n\rightarrow\ftwo^m$, $m\geq n$ in $n$-variables $X_i$ is one to one iff the set of all solutions $S_1=\{(x,\tilde{x})\}$ in $\ftwo^n\times\ftwo^n$ of the equation
   \beq\label{fxftildex}
   F(x)=F(\tilde{x})
   \eeq
   belong to the diagonal 
   \[
   D=\{(x,\tilde{x})|x=\tilde{x}\}
   \]
\end{lemma}

\begin{proof}
    $F$ is one to one implies that all solutions of equation (\ref{fxftildex}) are $x_i=\tilde{x}_i$ which are in the diagonal $D$. If the points in the diagonal $D$ are the only solutions of (\ref{fxftildex}), then $F$ is one to one.
\end{proof}

To utilize the observation of the above lemma consider a compact notation for minterms, in literals of $X_i$,
\[
m(X,\bar{a})=(X_1)^{a_1}(X_2)^{a_2}\ldots(X_n)^{a_n}
\]
where $\bar{a}=(a_1,\ldots,a_n)$ in $\ftwo^n$, the literal $(X_i)^{a}=X_i$ for $a=1$ and $(X_i)^{a}=X_i'$ for $a=0$.

\begin{theorem}\label{Thonxandxtilde}
Map $F:\ftwo^n\rightarrow\ftwo^m$, $m\geq n$ is one to one iff the set of $2^n$ elements
\beq\label{Impinxandxtilde}
M=\{m(X,\bar{a})m(\tilde{X},\bar{a})\forall\bar{a}\in\ftwo^n\}
\eeq
is the set of all OG implicants of the Boolean system (\ref{fxftildex}).    
\end{theorem}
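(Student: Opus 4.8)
The plan is to use Lemma \ref{Leminxandxtilde} to reduce one-to-one-ness to a statement about the solution set of the Boolean system (\ref{fxftildex}) in the $2n$ variables $X_i, \tilde X_i$, and then identify that solution set with the diagonal $D$ precisely when the claimed family $M$ is a complete OG implicant set. First I would observe that each term $m(X,\bar a)m(\tilde X,\bar a)$ is a minterm in all $2n$ variables, so the family $M$ is automatically orthogonal (distinct minterms multiply to $0$), and its $2^n$ satisfying assignments are exactly the points $(x,\tilde x)$ with $x=\tilde x=\bar a$, i.e.\ exactly the diagonal $D$. So the content of the theorem is: $M$ is a \emph{complete} set of implicants of (\ref{fxftildex}) iff the solution set of (\ref{fxftildex}) equals $D$.

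For the forward direction, suppose $F$ is one to one. By Lemma \ref{Leminxandxtilde} the solution set $S_1$ of (\ref{fxftildex}) is contained in $D$; conversely $D \subseteq S_1$ always, since $x=\tilde x$ trivially gives $F(x)=F(\tilde x)$. Hence $S_1=D$, and since $M$ is a set of implicants of (\ref{fxftildex}) whose union of satisfying assignments is exactly $D=S_1$, it is complete; orthogonality was noted above. For the converse, suppose $M$ is a complete OG implicant set of (\ref{fxftildex}). Then by completeness every solution $(x,\tilde x)$ of (\ref{fxftildex}) satisfies $m(X,\bar a)m(\tilde X,\bar a)=1$ for some $\bar a$, which forces $x=\tilde x=\bar a$, so $S_1\subseteq D$; by Lemma \ref{Leminxandxtilde} this means $F$ is one to one.

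The step that needs the most care — and the likely main obstacle — is the verification that $M$ being a \emph{valid} implicant set (i.e.\ each term actually \emph{is} an implicant of (\ref{fxftildex}), not merely a minterm) is equivalent to $D\subseteq S_1$, which is immediate, combined with the subtler point that completeness requires $S_1\subseteq D$. In other words I must be careful to separate the two inclusions $D\subseteq S_1$ (always true, giving validity of the implicants) and $S_1\subseteq D$ (equivalent to one-to-one-ness, giving completeness). A clean way to organize this is: the term $m(X,\bar a)m(\tilde X,\bar a)$ is an implicant of (\ref{fxftildex}) for every $\bar a$ unconditionally, because its unique satisfying point $(\bar a,\bar a)$ lies in $D\subseteq S_1$; and the collection of these $2^n$ implicants is complete iff $S_1$ contains no point outside $D$, i.e.\ iff $F$ is one to one by Lemma \ref{Leminxandxtilde}. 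Orthogonality holds throughout since the members of $M$ are pairwise distinct full minterms. This packaging avoids any genuine calculation and reduces everything to Lemma \ref{Leminxandxtilde} together with the elementary fact that a minterm in all variables has exactly one satisfying assignment.
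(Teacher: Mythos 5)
Your proposal is correct and follows essentially the same route as the paper: both directions reduce to Lemma \ref{Leminxandxtilde} via the observation that the satisfying assignments of the full minterms in $M$ are exactly the diagonal $D$. Your version is in fact slightly more careful than the paper's, since you explicitly separate the two inclusions $D\subseteq S_1$ (which makes each element of $M$ a valid implicant unconditionally) and $S_1\subseteq D$ (which is what completeness of $M$ encodes and what one-to-one-ness requires).
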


\begin{proof}
    If the set $M$ of minterms in literals $X$ and $\tilde{X}$ is the complete set of implicants of equations (\ref{fxftildex}) then all satisfying assignments $x_i$, $\tilde{x}_i$ of
    \[
    m(X,\bar{a})m(\tilde{X},\bar{a})=1
    \]
    is equal to the set of all solutions of (\ref{fxftildex}). Since this set of assignments is in the diagonal $D$, $F$ is one to one by the above lemma. 

    Conversely, if $F$ is one to one for each assignment $x$ in $\ftwo^n$ there exists $\tilde{x}=x$ in $\ftwo^n$ such that the evaluation $m(x,\bar{a})m(\tilde{x},\bar{a})=1$. Hence the set $M$ is a complete set of OG implicants of (\ref{fxftildex}).
\end{proof}

Theorem \ref{Thonxandxtilde} cannot be used for verifying one to one-ness of a given map $F$ in practice due to the exponential size of the set $M$.  

\appendix
\section{Boolean functions, equations and implicants}
We present here a very brief overview of essential theory of Boolean functions, equations and implicants to provide background for the previous sections. Reader is referred to \cite{Rude, Brown} for the basic theory of Boolean functions, equations and logic.

\subsection{Implicants of Boolean functions and equations}
 The binary field $\ftwo$ is the Boolean ring corresponding to the two element Boolean algebra $B_0$. Since this is the most relevant field from the point of view of applications, we shall throughout use the field $\ftwo$ for defining Boolean maps. Although many results are true for Boolean functions over an arbitrary finite Boolean algebra $B$, we shall consider their counterparts for the special case of $B_0$ algebra. A Boolean function in $n$ variables, $f(x_1,\ldots,x_n)$ is a map $f:\ftwo\rightarrow\ftwo$ defined by its truth table of values for all strings $\{0,1\}^n$ as arguments. The set of all Boolean functions in $n$ variables themselves form a Boolean algebra denoted $B_0(n)$ under formal rules of Boolean addition, multiplication and complement in variables. The Boole-Shannon formula of partial evaluation of Boolean functions shows that a Boolean function has a unique \emph{minterm} canonical form also called (SOP) form, a unique \emph{maxterm} canonical form, also called (POS) form and further a unique \emph{algebraic normal form} (ANF) which is a polynomial representation of the Boolean function. A map $F:\ftwo^n\rightarrow\ftwo^n$ is an $n$-tuple of Boolean functions 
\beq\label{MapFpolynomials}
\left[f_1(X_1,\ldots,X_n),\ldots,f_n(X_1,\ldots,X_n)\right]
\eeq
 where $f_i$ are polynomials in $n$ variables with co-effcients in $\ftwo$. A set of Boolean functions $\Phi=\{\phi_1,\ldots,\phi_m\}$ in $B_0(n)$ is said to be \emph{orthogonal} (OG) if $\phi_i\phi_j=0$ for $i\neq j$ and \emph{orthonormal} (ON) if it is OG and satisfies
\[
\sum_{i=1}^{m}\phi_i=1
\]
A fundamental fact about Boolean functions is that given an ON set $\Phi$ any Boolean functions $f$ in $B_0(n)$ has a representation
\beq\label{ONexp}
f=\sum_{\phi_i\in\Phi}\alpha_i\phi_i
\eeq
where functions $\alpha_i$ belong to the interval $[f\phi_i,f+\phi_i']$. This formula gives rise to the representation of all satisfying assignments of a Boolean system and an efficient algorithm for such a representation. The reader is referred to \cite{Sule2017} for a detailed discussion of this algorithm. The above OG representation formula is a generalization of the minterm canonical form SOP of a function (and dually can be written to generalise the maxterm representation as POS form).

A notion of \emph{ratios of Boolean functions} can be defined for special Boolean functions. If $f(X)$ and $g(X)$ are Boolean functions in $n$-variables denoted as $X$ then the ratio $f/g$ is defined for assignments $X=x$ when $g(x)=1$ by the value $f(x)$ i.e.\ $f/g=f(x)|(g(x)=1)$. An \emph{implicant} of a Boolean function $f$ is a term $t=l_{i_1}l_{i_2}\ldots l_{i_k}$ where $l_{i_k}$ are literals $x_{i_k}$ or $x_{i_k}'$ for variables $1\leq i_k\leq n$ such that whenever $t=1$ for any arguments $x_i=a_i$, then $f(a_i)=1$. This is denoted as $t\leq f$. If $f(X),g(X)$ are Boolean functions and $f(X)=g(X)$ is an equation denoted $E$ then $t$ is said to be an implicant of the equation $E$ if whenever $t(a_i)=1$ for some assignments $x_i=a_i$ then the equation is satisfied at $a_i$, $f(a_i)=g(a_i)$. A set of implicants $I$ is said to be a \emph{complete} set of implicants for a function $f$ or an equation $E$ if whenever for assignments $a_i$, $f(a_i)=1$ or $E$ is satisfied at $a_i$, there exists a $t$ in $I$ such that $t(a_i)=1$. A set of implicants $I=\{t_1,t_2,\ldots,t_m\}$ is said to be \emph{orthogonal} (OG) if $t_it_j=0$ for $i\neq j$. It follows that if $I$ is a complete OG set of implicants of $f$ then
\beq\label{Implicantexp}
f=\sum_{i=1}^{m}t_i
\eeq
and all solutions to the equation $f=1$ denoted $S(f)$ (the set of satisfying assignments of $f$, are given by the disjoint union
\beq\label{Allsolns}
S(f)=\bigsqcup_{i=1}^{m}S(t_i)
\eeq
where $S(t)$ denotes all assignments $x_i=a_i$ such that $t(a_i)=1$. Instead, if an equation $E$ is given a complete set of implicants represents all solutions of the equation $E$. For a system of Boolean equations and a set of Boolean functions an implicant $t$ is considered as a simultaneous implicant of all equations and functions. The set of all satisfying assignments of a system of equations or the product of functions can be described by a complete set of OG implicants as above for a single equation or a function. An algorithm to compute such a complete set of OG implicants is described in \cite{Sule2017} which is a non-brute force algorithm to represent the set of all satisfying assignments of a system of equations or a product of functions. Representation of all solutions of a system of Boolean equations or a product of functions by a complete OG set of implicants is very compact and efficient in storage because the number of implicants is much smaller than the number of solutions. 

\subsection{$GOE(F)$}
Let $F:\ftwo^n\rightarrow\ftwo^n$ be a Boolean map. Then the set $GOE(F)$ is the set of all points $y$ in $\ftwo^n$ such that $F(x)=y$ has no solution. Hence $GOE(F)$ is the complement of the image of $F$. It is also characterised as the set of points $x$ such that the iterates $F^{(k)}(x)$, $k=1,2,\ldots$ never return to $x$ or alternatively the set of points which are not periodic points of the map $F$. Hence it follows that $F$ is one to one iff $GOE(F)=\emptyset$. Using the implicant based representation of all solutions we can construct a non-brute force algorithm to represent the set $GOE(F)$. Hence such an algorithm gives a solution to the problem of deciding invertibility of $F$.
\section{Conclusion}
The well known Grobner basis algorithm as well SAT algorithms are not useful for deciding invertibility (or one to one ness) of maps in finite fields. This paper proposes an algorithm to solve this problem in the special case when the field is binary or the map is Boolean in terms of implicants of Boolean systems. A complete set of implicants of a Boolean system represents all solutions of the system hence the problem of deciding invertibility is harder than the SAT problem. The algorithm is parallel and non-brute force and can be much efficient if the map is defined by sparse functions. The time complexity for parallel implicant computation is $O(n)$ where $n$ is the number of variables. Space complexity however is exponential in the number of variables and is dependent on the properties of the map. The algorithm is based on a previously known parallel algorithm to characterize implicants of a Boolean system of equations. The algorithm is scalable for large number of variables if there exist possibilities of decomposition due to non overlapping variables as in sparse maps which arise naturally in practice. The more general finite field case of the problem is not solvable by this algorithm hence is open for further research. Unlike in the previous work on explicit formula of the inverse of a square map in terms of the reduced Koopman operator, results of this paper do not give such an explicit inversion formula. The central takeaway of this research is that an algorithmic solution for the problem of deciding one to one ness of Boolean systems and maps is proposed which can give a scalable solution for the practical case of sparse maps using parallel computation. Further improvements can lead to the practical use of this solution for intermediate computations in AI based computing paradigms.
%\printbibliography
%\bibliographystyle{abbrv}

\end{document}